\numberwithin{equation}{section}
\newtheorem{thm}{Theorem}[section]
\newtheorem{prop}[thm]{Proposition}
\newtheorem{problem}{Problem}
\newtheorem{remark}{Remark}
\newtheorem{definition}{Definition}
\newtheorem{conjecture}{Conjecture}
\newcommand{\RR}{\mathbb{R}}
\newcommand{\CC}{\mathbb{C}}
\newcommand{\ol}{\overline}
\newcommand{\supp}{\rm{supp}}
\newcommand{\p}{\partial}
\newcommand{\be}{\begin{equation}}
\newcommand{\ee}{\end{equation}}
\newcommand{\la}{\label}
\begin{document}

\title[Energy Equilibria for Point Charge Distributions]{Around a theorem of F. Dyson and A. Lenard:
Energy Equilibria for Point Charge Distributions in Classical Electrostatics}

\date{December 2019. \\  $\quad \quad ^\sharp$The author's research is supported by the Simons Foundation, under the grant 513381.}

\author[A. Abanov]{Artem Abanov}
\email{abanov@tamu.edu}
\address{MS 4242,Texas A$\&$M University, College Station, TX 77843-4242}

\author[N. Hayford]{Nathan Hayford}
\email{nhayford@mail.usf.edu}
\address{4202 E. Fowler Ave., CMC342, Tampa, FL 33620}

\author[D. Khavinson]{Dima Khavinson$^\sharp$}
\email{dkhavins@usf.edu}
\address{4202 E. Fowler Ave., CMC342, Tampa, FL 33620}

\author[R. Teodorescu]{Razvan Teodorescu}
\email{razvan@usf.edu}
\address{4202 E. Fowler Ave., CMC342, Tampa, FL 33620}

\begin{abstract}
We discuss several results in electrostatics: Onsager's inequality, an extension of Earnshaw's theorem,  and a result 
stemming from the celebrated conjecture of Maxwell on the number of points of electrostatic equilibrium. Whenever 
possible, we try to provide a brief historical context and references. 

{\bf Keywords:}{\it { Electrostatics, potential theory, Onsager's inequality, Maxwell's problem, energy equilibria.}}

\end{abstract}

\maketitle

\hfill {\it{Dedicated to the memory of Freeman Dyson (1923 -- 2020).\\}} 

\section{Introduction.}
Electrostatics is an ancient subject, as far as most mathematicians and physicists are concerned. After several 
centuries of meticulous study by people like Gauss, Faraday, and Maxwell (to name a few), one wonders if it is 
still possible to find surprising or new results in the field. Throughout this text, we address several seemingly
classical electrostatics problems that have not been fully addressed in the literature, to the best of our knowledge. 
Let us begin by establishing some notations.

For vectors $x,y \in \mathbb{R}^d, \, d \geq 3$, we define the function $K_{d}(x,y)$ by the formula
	\begin{equation} \label{Riesz}
		K_{d}(x, y) = \frac{1}{(2-d) \omega_{d-1}}\frac{1}{|x-y|^{d-2}}.
	\end{equation}
Here, $\omega_{d-1}$ is the surface area of the unit sphere in $\RR^d$. $K_{d}$ is the fundamental 
solution for the Laplace operator in $\mathbb{R}^d$ (i.e., $\Delta_y K_{d}(x, y) = \delta_x$). 

Furthermore, given a locally finite (signed) Borel measure $\mu$ with support $\Sigma$, 
we define the \textit{Newtonian (or Coulomb) potential} of $\mu$ with respect to the kernel $K_{d}(x,y)$ by
	\begin{equation}
		\left(U^{\mu}_{\Sigma}\right)(x) =
		\int_{\Sigma} K_{d}(x,y) d\mu(y).
	\end{equation}
If the support of the measure $\mu$ is either clear from context or, otherwise, irrelevant to the problem, 
we drop the subscript $\Sigma$, and write $U^{\mu}(x)$.

We define the \textit{Newtonian (or Coulomb) energy} of a measure $\mu$ with respect to the kernel $K_{d}(x,y)$ by
	\begin{equation}
		W_{\Sigma}[\mu] = \int_{\Sigma} \left (U^{\mu}_{\Sigma}\right)(x) d\mu(x).
	\end{equation}
We also refer to this functional as the \textit{electrostatic energy}, and sometimes use the simpler notation 
$W_{\Sigma}[\mu] = W[\mu]$ - cf. \cite{Kellogg, Landkof} for the basics of Potential Theory.

\begin{definition}
We say a charge distribution (measure) $\mu$ is in \textit{constrained equilibrium}, when its electrostatic
potential 
	\begin{equation} \la{1}
		U^{\mu}_{\Sigma}(x) = \int_{\Sigma} K_d(x,y) d\mu(y), 
	\end{equation}
is constant (possibly taking different values) on each connected component $\Sigma_j$ of the support of $\mu$, 
subject to the constraints 
	\begin{equation} \la{2}
		\mu_j = \mu(\Sigma_j) = Q_j, \, j = 1, 2, 3, \ldots, m, \quad  \sum_{j=1}^m Q_j = Q. 
	\end{equation}
\end{definition}

In the case when $\Sigma_j$ consists only of one point, i.e. $\Sigma_j = \{ x_j \}$, the condition that the 
potential $U^{\mu}_{\Sigma}$ be constant is replaced by the gradient condition 
	\begin{equation} \la{3}
		\left (  \nabla U^{\mu}_{\Sigma \setminus \Sigma_j}  \right ) (x_j)= 0.
	\end{equation}
 
Now, we are ready to state the first problem discussed in this paper.
\begin{problem} \la{p1}
Given a total charge $Q \in \mathbb{R}$ and a locally finite Borel measure $\mu$, such that 
$\mu(\mathbb{R}^n) = Q$, is there a collection of disjoint compact sets $\{ \Sigma_j \}_{j = 1}^m$, such that 
$\mu(\Sigma_j) = Q_j$, $ \sum_{j=1}^m Q_j = Q$,  and $\mu$ has a constrained equilibrium configuration on 
$\Sigma := \cup_j \Sigma_j$? 
\end{problem}	

It should be noted that, while the associated energy 
	\begin{equation} \la{4}
		W_{\Sigma}[\mu] \equiv \int_{\Sigma} U^{\mu}_{\Sigma}(x) d\mu(x) 
	\end{equation}
does solve a variational problem over the set $\mathcal{M}$ of measures constrained by \eqref{2}, it is by no means 
automatically also the solution to the free optimization problem 
	\begin{equation} \la{5}
		W[\mu]  = \inf_{\sigma \in \mathcal{M}} W[\sigma].
	\end{equation}

In other words, a solution for Problem~\eqref{p1} merely gives an equilibrium charge configuration, which need not 
be also a \textit{stable equilibrium}, i.e. a local minimum of the energy functional, as opposed to a saddle point. 
The (stronger) stable equilibrium problem may not have a solution, for a generic choice of the support $\Sigma$ and 
set of constraints \eqref{2}.

\begin{remark} \normalfont
It should also be noted that the choice of total charge $Q$ is not important, except to distinguish between neutral 
configurations ($Q = 0$) and non-neutral ($Q \ne 0$). This is due to the fact that, under a simple rescaling 
\begin{equation*}
	Q = \lambda q, \, \, Q_j = \lambda q_j, \,\, \lambda \in \mathbb{R}\setminus \{0\},
\end{equation*}
the potential scales by a factor of $\lambda$, and the energy by a factor of $\lambda^2$, leaving the variational 
problems (and their solutions) unchanged. Therefore, the only two distinct cases that need to be considered are $Q 
= 0$ and $Q = 1$. 
\end{remark}	

The second (classical) problem we discuss is the characterization of critical manifolds $C$ 
(specifically, curves) on which the gradient of the potential of a given charge configuration vanishes. More 
precisely, the problem in $\RR^2$ is:

\begin{problem} \la{p2}
Let $\mu$ be a locally finite charge distribution with planar support  $\Sigma \subset \RR^2$. Can the critical
manifold $C \subset \RR^3$ of $\mu$, defined by 
	\begin{equation} \la{8}
	C = \{ x \in \mathbb{R}^3 : \nabla U^{\mu}_{\Sigma}(x) = 0 \}
	\end{equation}
contain a curve in $\RR^2$? 
\end{problem}	

This problem has numerous applications, some of which are discussed in this paper. One of its most obvious 
implications is that a collection of charges, placed in the plane supporting the distribution $\mu$, cannot have 
a curve in the same plane as an equilibrium configuration. 
	
Problem \ref{p2} has a distinguished history, and can be regarded as a special case of Maxwell's Problem 
(cf. \cite{Max} for further references). In short, Maxwell asserted (without proof) that if $n$ point charges 
$\{x_j\}_{j=1}^n$ are placed in $\RR^3$, then there are at most $(n-1)^2$ points in $\RR^3$ at which the 
electrostatic field vanishes. More precisely, if each $x_j$ has charge $q_j \in \RR$, then 
	\begin{equation}\label{Maxwell-bound}
		\# \bigg\{ x : \nabla \bigg( \sum_{j=1}^n \frac{q_j}{|x-x_j|}\bigg) = 0 \bigg\} \leq (n-1)^2, 
	\end{equation}
or is, otherwise, infinite (e.g. contains a curve, the `degenerate case'). Thompson, while preparing
Maxwell's works for publication, couldn't prove it, and the problem has since become known as Maxwell's conjecture -
cf. \cite{Max} for more details. So far, the only real progress on this problem has been achieved in \cite{Max},  
where it was verified that the cardinality of the set of isolated points in equilibrium in \eqref{Maxwell-bound} 
is finite. But even in the case of $n = 3$, the best known estimate is $12$, not $4$! No counterexamples to the 
conjecture have been found.

The first examples of curves of degeneracy where \eqref{Maxwell-bound} holds go back to \cite{Y}. Futher, partial 
results, for a particular case when the charges are coplanar can be found in \cite{Killian,Peretz}. Killian, in 
particular, conjectured that in the latter case the degeneracy curves are all transversal to the plane of the
charges. This is proven in \S4. Finally, note that in the plane, if we use the logarithmic potential, the estimate
improves to $(n-1)$  and is an obvious corollary of the Fundamental Theorem of Algebra.
	
This paper is organized as follows: in \S2, we discuss Problem~\ref{p1}, first in its classical form (for the
Newtonian potential in $\mathbb{R}^3$), and the proof of F. Dyson and A. Lenard for an inequality first discovered by 
L. Onsager \cite{Onsager}, along with extensions of the same energy inequality to the case of potentials 
in $\mathbb{R}^n$. 

In \S3 we focus on necessary conditions for the existence of an equilibrium configuration, in particular for the 
case of Coulomb potentials (in $\mathbb{R}^3$). A necessary condition independent of the support, and which can be 
expressed as a constraint on the measure density moments, is also discussed in \S3 (Intersection Theorem). 
	
Section \S4 is dedicated to the precise formulation and solution of degeneracy in Maxwell's problem, 
for charge configurations constrained to two-dimensional subspaces of $\RR^3$. In section \S5 we pose a fascinating
question, originating in approximation theory, that we frivolously label `Faraday's problem', believing that
Sir Michael Faraday would have never hesitated to answer it based on empirical evidence. 
	
\section{Variations on Onsager's Inequality.}
	 The Onsager inequality was originally discussed by Lars Onsager \footnote{Lars Onsager was a theoretical 
	 physicist and chemist. He was best known for his work in statistical mechanics, in particular his 
	 eponymous relations, which won him the Nobel Prize in Chemistry in 1968, and for his exact solution of the 
	 2D Ising model. For further details about the life and work of Onsager, see \cite{LO}.} in a relatively 
	 little known paper \cite{Onsager}. 
	 Onsager himself did not provide a proof of this inequality, and it was not until 30 years
	 later that a full proof was given by Freeman Dyson and Andrew Lenard \cite{Dyson-Lenard}. 
	 Here, we shall present their original proof of the inequality, and consider subsequent generalizations 
	 to $\mathbb{R}^n$. We remark that this inequality was brought to the attention of the authors by 
	 Eero Saksman et. al., who found far-reaching extensions of it in a probablistic context \cite{Saksman}.
	\subsection{F. Dyson and A. Lenard's Original Proof.}
		Let $\{x_j\}_{j=1}^n$ be a collection of point charges in $\RR^3$, with charges $\{q_j\}_{j=1}^n$,
		each of which takes one of the values $\pm 1$. The electrostatic energy due to this collection of point 
		charges is given by
			\begin{equation}
				-\frac{1}{4\pi} \sum_{j < k} \frac{q_j q_k}{|x_j - x_k|}.
			\end{equation}
		Furthermore, for each $x_j$, denote the shortest distance to the next point charge by $\delta_j$:
			\begin{equation}
				\delta_j := \min_{k \neq j} |x_j - x_k|.
			\end{equation}
		The Onsager inequality states that the electrostatic energy of the point charges is bounded by the sum of 
		inverses of $\delta_j$'s:
			\begin{prop} \label{original}
			Let $\{x_j\}_{j=1}^n$, $\{q_j\}_{j=1}^n$, and $\{\delta_j\}_{j=1}^n$ be defined as above. Then
				\begin{equation} \label{Onsager}
					\frac{1}{4\pi} \sum_{j < k} -\frac{q_j q_k}{|x_j - x_k|} <
					\frac{1}{4\pi} \sum_{j=1}^n \frac{1}{\delta_j}.
				\end{equation}

			\end{prop}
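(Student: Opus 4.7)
The plan is to prove the inequality by the classical smearing trick: replace each point charge by a uniformly charged spherical shell of cleverly chosen radius, and then apply the manifest positivity of the electrostatic energy of the resulting smooth distribution.

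Concretely, I would set $r_j := \delta_j/2$ for each $j$, and let $\sigma_j$ be the uniform probability measure on the sphere $S_j := \{ x \in \RR^3 : |x - x_j| = r_j \}$. Consider the smeared charge distribution
\[
	\tilde{\mu} := \sum_{j=1}^n q_j \sigma_j.
\]
The first key observation is that the balls $B_j = \{|x - x_j| < r_j\}$ are pairwise disjoint: since $\delta_j \leq |x_j - x_k|$ and $\delta_k \leq |x_j - x_k|$, one gets $r_j + r_k = (\delta_j + \delta_k)/2 \leq |x_j - x_k|$. Consequently, by Newton's shell theorem, the mutual interaction energy between $\sigma_j$ and $\sigma_k$ (for $j \neq k$) equals the point-charge interaction $1/(4\pi |x_j - x_k|)$. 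The second ingredient is the self-energy of a single shell: a direct computation using that the potential of a uniform shell of radius $r$ and unit charge equals $1/(4\pi r)$ on and inside the shell gives self-energy $1/(8\pi r_j) = 1/(4\pi \delta_j)$.

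Assembling these pieces, the standard (physics-sign) Coulomb energy of $\tilde\mu$ decomposes as
\[
	W_{\rm phys}[\tilde\mu] = \sum_{j < k} \frac{q_j q_k}{4\pi |x_j - x_k|} + \sum_{j=1}^n \frac{1}{4\pi \delta_j}.
\]
The proof concludes by invoking the non-negativity $W_{\rm phys}[\tilde\mu] = \frac{1}{8\pi}\int_{\RR^3} |\nabla \phi|^2 \, dx \geq 0$, where $\phi$ is the Newtonian potential of $\tilde\mu$. Rearranging the displayed identity yields Onsager's inequality. Strictness follows because $\tilde\mu \not\equiv 0$, so $\phi$ is not identically constant and the Dirichlet integral is strictly positive.

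The only genuine subtlety, and the step I would be most careful with, is the optimal choice of the radii $r_j$. Taking $r_j$ larger would cause the shells to overlap, breaking the clean identification of mutual shell-shell energies with point-charge energies (and instead producing extra, uncontrollable cross-terms). Taking $r_j$ smaller would strictly inflate the shell self-energy $1/(8\pi r_j)$, weakening the bound. The choice $r_j = \delta_j/2$ is precisely the largest common radius for which the disjointness of the balls is guaranteed, and this sharp matching is what produces exactly $1/(4\pi \delta_j)$ on the right-hand side of \eqref{Onsager}.
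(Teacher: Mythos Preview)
Your proof is correct and follows essentially the same approach as the paper's: smear each point charge onto a sphere of radius $\delta_j/2$, invoke Newton's shell theorem so that the mutual energies remain those of point charges, compute the shell self-energies, and conclude from the positivity of the Dirichlet integral. Your write-up is in fact slightly more careful than the paper's, since you explicitly verify the disjointness condition $r_j + r_k \le |x_j - x_k|$ and justify the strictness of the inequality.
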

		We remark that the inequality agrees with the intuition that spreading the charges out over a larger
		distance will decrease their electrostatic energy.
		\begin{proof}
		We now replicate Dyson and Lenard's original proof of the inequality. It is based on the intuition 
		that replacing point charges with uniform distributions on spheres of the same total charge will not 
		change the electrostatic energy of the configuration, along with the fact that the total energy is 
		always positive. Consider a distribution of charges $\mu$ supported on spheres of radii $\{\rho_j\}_{j=1}^n$
		centered at $\{x_j\}_{j=1}^n$, each carrying the total charge $\{q_j\}_{j=1}^n$. In other words, the 
		sphere $B_j$ centered at $x_j$ with radius $\rho_j$ carries uniform surface charge density 
		$\frac{d\mu_j}{dA} = \frac{q_j}{4\pi \rho_j^2}$ (here, $dA$ denotes the surface area measure). 
		At any point $x$ outside of these spheres, this distribution of charge generates the potential
			\begin{equation*}
				U(x) = \sum_{j=1}^n \frac{q_j}{4\pi |x - x_j|}.
			\end{equation*}
		The desired inequality will follow from the positivity of the total energy $W$ of this distribution 
		of charge:
			\begin{equation*}
				W = \int U(x) d \mu (x) = \int_{\RR^3} |\nabla U(x)|^2 dx \geq 0.
			\end{equation*}
		The latter calculation is well known in potential theory (see \cite{Landkof}, for example), and is a 
		straighforward corollary of Green's formula. We can decompose the total energy into two parts, the self-energy 
		of the spheres and the mutual pairwise energy: 			
		\begin{equation*}
				W = \sum_{j=1}^n \int_{B_j}U_{\{q_j\}}d\mu_j + 
				\sum_{k\neq j, k = 1}^n \sum_{j=1}^n q_j U_{\{q_k\}}(x_j)  = 
				\sum_{j=1}^n\frac{q_j^2}{4\pi \rho_j} + 2 \sum_{1 \leq j < k \leq n}
				\frac{q_j q_k}{4\pi|x_j - x_k|}.
			\end{equation*}
		Therefore, the positivity of the total energy can be rewritten as
			\begin{equation} \label{inequality1}
				\sum_{j=1}^n \frac{q_j^2}{4\pi \rho_j} > 2 \sum_{1 \leq j < k \leq n}
				 -\frac{q_j q_k}{4\pi|x_j - x_k|}.
			\end{equation}
		Notice that the right hand side is just the expression for the electrostatic energy of the $n$ point
		charges. We have some freedom in the above inequality in picking the radii of the spheres; one particularly
		good choice will be to pick the radii to be as large as possible without having any of the spheres
		intersect; thus, we choose
			\begin{equation}
				\rho_j = \frac{1}{2}\min_{k \neq j}|x_j - x_k| = \frac{1}{2}\delta_j
			\end{equation}
		Then the inequality \eqref{inequality1} becomes
			\begin{equation}
				\sum_{j=1}^n \frac{q_j^2}{\delta_j} > \sum_{1 \leq j < k \leq n}
				 -\frac{q_j q_k}{|x_j - x_k|}.
			\end{equation}
		Finally, in the case where  $q_j \in \{\pm 1\}$, we obtain the original Onsager's inequality:
			\begin{equation}
				\sum_{j=1}^n \frac{1}{\delta_j} > \sum_{1 \leq j < k \leq n}
				 -\frac{q_j q_k}{|x_j - x_k|}.
			\end{equation}
		\end{proof}
		\begin{remark} \normalfont
		The reader may have noticed that the inequality relies on redistribution of the point charges
		over spheres, and the object of interest is really the self-energy of the spheres (i.e. the expression on
		the left hand side of equation \eqref{inequality1}). One may wonder if this inequality can be improved by 
		redistributing the total charge of each point charge differently, for example by replacing the point charge
		with a uniform volume distribution over the ball of radius $\rho$ (of equal total charge). In fact, the
		uniform surface distribution yields the optimal estimate from this perspective, since this distribution of
		charge yields the smallest self-energy. This follows readily from the fact that the equilibrium measure of
		the ball in $\RR^3$ is the uniform distribution on its surface.
		\end{remark}
	\subsection{Generalizations of the Onsager Inequality.}
		We can easily generalize the Onsager inequality to $\RR^d$, by considering the appropriate electrostatic 
		(or, Coulomb) potential in $\RR^d$:
		\begin{prop} \label{Or1}		
		 For $d > 2$, and with all notations adopted from Section 1, the Onsager inequality becomes
			\begin{equation} \label{q2}
				2^{d-3} \sum_{j=1}^n \frac{q_j^2}{\delta_j^{d-2}} > 
				-\sum_{1\le j<k \le n} \frac{q_j q_k}{|x_j - x_k |^{d-2}}.
			\end{equation}
		\end{prop}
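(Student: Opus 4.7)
The plan is to adapt Dyson and Lenard's argument from the previous subsection verbatim, replacing the three-dimensional kernel $1/(4\pi|x-y|)$ with its $d$-dimensional analogue $1/|x-y|^{d-2}$ (up to normalization), and tracking how the exponent $d-2$ enters the self-energy of a sphere. The main objective is simply to identify where the power $2^{d-3}$ comes from.

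The construction is identical: I smear each point charge $q_j$ at $x_j$ into a uniform surface measure $\mu_j$ of total mass $q_j$ on $\partial B(x_j,\rho_j)$, for radii $\rho_j$ to be selected later, and set $\mu := \sum_j \mu_j$. Positivity of the total energy, $W[\mu] > 0$, follows from the same Green's identity used in Proposition~\ref{original}: up to a positive dimensional constant, $W[\mu] = \int_{\mathbb{R}^d}|\nabla U^{\mu}|^2\,dx$, which is strictly positive whenever $\mu \not\equiv 0$. The cross-energy of $\mu_j$ with $\mu_k$ still equals $q_j q_k / |x_j - x_k|^{d-2}$, provided the spheres are pairwise disjoint: this is the $d$-dimensional shell theorem, an immediate consequence of the rotational symmetry of $\mu_j$ about $x_j$ together with the harmonicity of $K_d(\,\cdot\,,x_j)$ off $x_j$. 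For the self-energy of $\mu_j$, the potential $U^{\mu_j}$ is harmonic and rotationally symmetric inside $B(x_j,\rho_j)$, hence constant there, and continuity across $\partial B(x_j,\rho_j)$ fixes that constant at its exterior boundary value $q_j/\rho_j^{d-2}$; integrating against $\mu_j$ yields a self-energy of $q_j^2/\rho_j^{d-2}$.

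Assembling these observations, the positivity $W[\mu] > 0$ becomes
\begin{equation*}
\sum_{j=1}^n \frac{q_j^2}{\rho_j^{d-2}} \;+\; 2\sum_{1\le j<k\le n} \frac{q_j q_k}{|x_j - x_k|^{d-2}} \;>\; 0,
\end{equation*}
valid whenever the spheres $\partial B(x_j,\rho_j)$ are pairwise disjoint. Taking the optimal choice $\rho_j = \delta_j/2$, as in the proof of Proposition~\ref{original}, produces the factor $2^{d-2}$ on each self-energy term; moving the cross-energy sum to the opposite side and dividing by $2$ yields the constant $2^{d-3}$ in \eqref{q2}. I do not anticipate any substantial obstacle: the only steps beyond routine bookkeeping are the $d$-dimensional shell theorem and the self-energy formula, and both follow in one line from the mean value property of harmonic functions applied to the kernel $K_d$.
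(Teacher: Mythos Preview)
Your proposal is correct and follows essentially the same route as the paper: smear each point charge uniformly over a sphere, invoke positivity of the energy $W[\mu]=\int|\nabla U^\mu|^2>0$, compute the self- and cross-energies via the shell theorem/mean value property, and finally set $\rho_j=\delta_j/2$ to produce the factor $2^{d-3}$. If anything, you are slightly more explicit than the paper about why the cross-energy reduces to $q_jq_k/|x_j-x_k|^{d-2}$ and why the self-energy equals $q_j^2/\rho_j^{d-2}$.
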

		The proof is virtually identical to the one in $\RR^3$; we provide a sketch here. 
		Consider a distribution of charges $\mu$ supported on spheres of radii $\{\rho_j\}_{j=1}^n$ centered 
		at $\{x_j\}_{j=1}^n$, each carrying total charge $\{q_j\}_{j=1}^n$. In other words, the sphere
		centered at $x_j$ with radius $\rho_j$ carries uniform surface charge density $\frac{q_j}{\omega_{d-1}
		\rho_j^{d-1}}$. (Here, as in \S1, $\omega_{d-1}$ denotes the surface area of the unit sphere in $\RR^d$.) 
		Again, the pivotal fact is the positivity of the total energy: 
			\begin{equation}
				W = \sum_{j=1}^n \frac{q_j^2}{\omega_{d-1} \rho_j^{d-2}} + 
				2\sum_{1\leq j<k \leq n}\frac{q_j q_k}{\omega_{d-1} |x_j - x_k |^{d-2}} > 0.
			\end{equation}
		As before, a sharper upper bound is provided by letting each sphere become tangent to its 
		nearest neighbor, i.e., by choosing:
			\begin{equation}
				\rho_j = \frac{1}{2} \min_{k \ne j} |x_j - x_k| = \frac{1}{2} \delta_j.
			\end{equation}
		Thus, we obtain that
			\begin{equation}
				2^{d-3} \sum_{j=1}^n \frac{q_j^2}{\delta_j^{d-2}} > 
				\sum_{1\le j<k \le n} -\frac{q_j q_k}{|x_j - x_k |^{d-2}}. 
			\end{equation}
		Finally, letting $q_j = \pm 1$ yields an inequality equivalent to \eqref{Onsager}:
			\begin{equation} \label{ineq}
				2^{d-3} \sum_{j=1}^n \frac{1}{\delta_j^{d-2}} >
				\sum_{1\le j<k \le n} -\frac{q_j q_k}{|x_j - x_k |^{d-2}}. 
			\end{equation}
		\begin{remark} \normalfont
			We remark that \eqref{ineq} can hold for any number of point charges; both sides of the inequality 
			can be made arbitrarily small  by choosing a configuration of charges with the nearest distance
			$\delta = \min_j{\delta_j}$ to be sufficiently large. Thus, Onsager's inequality is strict, and only
			becomes an equality when the charges are moved away to infinity.
		\end{remark}
	In $d = 2$ dimensions, the Coulomb interaction is no longer a power law; moreover, the total energy of a 
	distribution of charge is no longer necessarily positive. The positivity of the total energy is indispensible in 
	the proof above; thus no version of the Onsager inequality as general as the one for $\RR^d$ ($d > 2$) exists. 
	However, if we impose additional conditions 
	\footnote{For example, one could consider only charge configurations with total charge $\sum_j q_j = 0$; such
	configurations are guaranteed to have positive energy. Alternatively, if one imposes that the charges are all
	confined to the unit disc, positivity of total energy is again ensured. The proof of these facts can be found in
	\cite{Landkof}.} 
	to guarantee the energy postivity constraint, an Onsager-like inequality can be written down. However, the result
	is rather artificial, since it is only valid for very specific charge configurations; thus, we omit it.
\section{Intersection Theorem.}
			As we have already seen, two-dimensional electrostatics is rather special; the following theorem is no
		exception to this rule. The theorem may have been known earlier, and the authors would be interested to 
		see if one could find the earliest instance of it in the literature. We remark that it is a theorem about
		point charges in \textit{unstable} equilibrium, in comparison to the celebrated Earnshaw's theorem, which  
		ensures that unstable equilibria are the only nontrivial equilibrium configurations. In other words, 
		the potential of an electrostatic field cannot have local minima (or maxima) in space free of charges, 
		only saddle points (the proof, of course, is obvious, since the electrostatic potential is a harmonic 
		function away from the support of the charges --  cf. \cite{Earnshaw}).
		
		\subsection{Intersection Theorem: A Necessary Condition for Equilibrium.}
			Let $\{z_i\}_{i=1}^n$ be point charges in the complex plane, with corresponding charges 
			$\{q_i\}_{i=1}^n$. Assume that the charges are in \textit{equilibrium}, i.e. the electrostatic force
			acting on each charge is zero:
				\begin{equation} \label{equilibrium}
					\sum_{j \neq i} \frac{q_j}{z_i - z_j} = 0,
				\end{equation}
			for each $i = 1, ... , n$ (The field intensity is $-\frac{1}{\pi z}$, as a consequence of the fact that
			the Coulomb kernel is $\frac{1}{2\pi}\log\frac{1}{|z|}$). The Intersection Theorem provides 
			necessary conditions for these charges to be in equilibrium:
			\begin{prop} \label{prop3.1}
			Let $\{z_i\}_{i=1}^n$ be point charges (with corresponding charges $\{q_i\}_{i=1}^n$) in equilibrium
			in the complex plane. Then, necessarily,  
				\begin{equation} \label{Abanov}
					\sum_{i=1}^n q_i^2 = \bigg( \sum_{i=1}^n q_i \bigg)^2.
				\end{equation}
			Furthermore, for each $k \geq 0$, we must also have that
				\begin{equation} \label{eq-relations}
					(k+1)\sum_{i=1}^n q_i^2 z_i^k = \sum_{\ell = 0}^k 
					\bigg(\sum_{i=1}^n q_i z_i^\ell \bigg)\bigg(\sum_{j=1}^n q_j z_j^{k-\ell}\bigg).
				\end{equation}
			\end{prop}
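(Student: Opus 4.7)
The plan is to encode the equilibrium data into the rational function $f(z) := \sum_{i=1}^n q_i/(z-z_i)$ and extract the moment identities \eqref{Abanov} and \eqref{eq-relations} by computing $f(z)^2$ in two different ways.

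The first computation is algebraic. Split off the diagonal,
\[ f(z)^2 \;=\; \sum_{i=1}^n \frac{q_i^2}{(z-z_i)^2} \;+\; \sum_{i\neq j}\frac{q_i q_j}{(z-z_i)(z-z_j)}, \]
and apply the partial-fraction identity $\frac{1}{(z-z_i)(z-z_j)}=\frac{1}{z_i-z_j}\bigl(\frac{1}{z-z_i}-\frac{1}{z-z_j}\bigr)$ to each off-diagonal term. After symmetrizing $i\leftrightarrow j$, the coefficient of $\frac{1}{z-z_i}$ in the resulting sum equals $2q_i\sum_{j\neq i}q_j/(z_i-z_j)$, which vanishes by the equilibrium hypothesis \eqref{equilibrium}. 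Hence equilibrium collapses all the cross terms and yields the clean identity
\[ f(z)^2 \;=\; \sum_{i=1}^n \frac{q_i^2}{(z-z_i)^2}. \]

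The second computation is a Laurent expansion at $z=\infty$. Abbreviate $S_k:=\sum_i q_i z_i^k$ and $T_k:=\sum_i q_i^2 z_i^k$. The geometric series $\frac{1}{z-z_i}=\sum_{k\ge 0}z_i^k/z^{k+1}$ and its derivative give
\[ f(z)=\sum_{k\ge 0}\frac{S_k}{z^{k+1}}, \qquad \sum_{i=1}^n\frac{q_i^2}{(z-z_i)^2}=\sum_{k\ge 0}\frac{(k+1)T_k}{z^{k+2}}. \]
Cauchy-squaring the first series and matching the coefficient of $z^{-k-2}$ on both sides of the boxed identity produces $(k+1)T_k=\sum_{\ell=0}^{k}S_\ell S_{k-\ell}$, which is precisely \eqref{eq-relations}. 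Specializing $k=0$ recovers \eqref{Abanov} as $T_0=S_0^2$.

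The only substantive step is the first one: verifying that the equilibrium relation exactly annihilates the off-diagonal coefficients in $f(z)^2$. This is a short symmetric-sum bookkeeping exercise. Everything after is formal Laurent manipulation, with convergence at infinity automatic from $f(z)=O(1/z)$, so no analytic subtlety arises.
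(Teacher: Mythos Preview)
Your proof is correct and is essentially identical to the paper's: the paper's function $G(z)=(\partial_z\log g(z))^2$ with $g(z)=\prod_i(z-z_i)^{q_i}$ is exactly your $f(z)^2$, and the paper carries out the same diagonal/off-diagonal split, the same partial-fraction cancellation via the equilibrium condition, and the same comparison of Laurent coefficients at infinity. The only difference is cosmetic notation.
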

			
			\begin{remark} \label{sphere} \normalfont
			The name chosen for this result follows from a geometric interpretation of Eq.~\eqref{Abanov}: 
			consider the point in $\mathbb{R}^n$, with coordinates $\{q_k\}_{k=1}^n$. Then a simple way of 
			interpreting \eqref{Abanov} is to say that it describes the intersection between the hyperplanes 
			$\sum_k q_k = \pm |Q|$ and the sphere of radius $|Q|$, centered at the origin. For example, this shows 
			that for $n = 2$ the only solution is trivial, i.e. only one charge can be non-zero. 
			\end{remark}
			
			\begin{remark} \normalfont
				For $n > 1$, \eqref{eq-relations} implies that the charges $\{q_j\}_{j=1}^n$, if we think of them
				as vectors in $\CC^n$ with real coordinates, must satisfy infinitely many quadratic equations (i.e.,
				lie in the intersection of infinitely many quadrics \eqref{eq-relations} in $\CC^n$, with 
				coefficients depending on the positions $z_j \in \CC$ where the charges sit. The special case 
				\eqref{Abanov} implies that, if $\sum q_j = Q = 0$, equilibrium never occurs.
				
				Obviously, the configurations that are in equilibrium (and hence satisfy the infinitely many 
				equations  \eqref{eq-relations}) are very special. But they do exist! For example, if we 
				equidistribute $n-1$ equal charges $q$ at the vertices of a regular $(n-1)$-gon on the unit 
				circle $\{|z| = 1\}$, and then place a charge $q_n := -\frac{q (n-2)}{2}$ at the center of the
				circle, the total force acting on each charge will be zero. Hence, for this configuration,
				\eqref{eq-relations} (and also \eqref{Abanov}) hold.		
			\end{remark}			
			
			\begin{proof}
			Define functions $g(z) = \prod_{i=1}^n (z-z_i)^{q_i}$, and $G(z) = (\partial_z \log g(z) )^2$ 
			($G(z)$ can be interpreted as the square of the complex electric field). Consider the expansion of 
			$G(z)$ near $\infty$ in two different ways: first, write 
				\begin{align}
					G(z) &= \sum_{i = 1}^n \frac{q_i^2}{(z-z_i)^2} + 
					\sum_{i \neq j} \frac{q_i q_j}{z_i - z_j}\bigg( \frac{1}{z-z_i} - \frac{1}{z-z_j}\bigg) 
					\label{E-squared}\\
					&= \sum_{i = 1}^n \frac{q_i^2}{(z-z_i)^2} \nonumber,
				\end{align}
			using the equilibrium condition \eqref{equilibrium} by first summing over $i$ to get rid of the second 
			term in \eqref{E-squared}. Expanding $G(z)$ at infinity, we find that
				\begin{equation*}
					G(z) = \sum_{k=0}^\infty \frac{1}{z^{k+2}} \bigg((k+1)\sum_{i=1}^n q_i^2 z_i^k\bigg).
				\end{equation*}
			Now, expand $G(z)$ without taking into account the condition for equilibrium:
				\begin{equation*}
					G(z) = \sum_{k=0}^\infty \frac{1}{z^{k+2}} 
					\bigg\{\sum_{\ell = 0}^k \bigg(\sum_{i=1}^n q_i z_i^\ell\bigg) 
					\bigg(\sum_{j=1}^n q_j z_j^{k-\ell}\bigg)\bigg\}.
				\end{equation*}
			Since these expansions must be the same, we can equate their coefficients termwise and obtain
				\begin{equation}
					(k+1)\sum_{i=1}^n q_i^2 z_i^k = \sum_{\ell = 0}^k 
					\bigg(\sum_{i=1}^n q_i z_i^\ell \bigg)\bigg(\sum_{j=1}^n q_j z_j^{k-\ell}\bigg),
				\end{equation}
			for each $k \geq 0$. 
			\end{proof}			
			Every one of the above conditions must necessarily hold for the charges to be in equilibrium. 
		\subsection{Generalization to Compactly Supported Charge Distributions.}
			The Intersection Theorem may be generalized to compactly supported charge distributions, as well.
			\begin{prop}\label{prop3.2}
			Let $\rho(z)$ be a continuous density of charge compactly supported on a bounded domain 
			$\Omega \subset \CC$. Suppose again the charges are in equilibrium. Then, necessarily,
				\begin{equation}
					(k+1)\int_\Omega \rho^2(\zeta) \zeta^k dA(\zeta) = 
					\sum_{\ell = 0}^k \bigg(\int_\Omega \rho(\zeta)\zeta^\ell dA(\zeta) 
					\int_\Omega \rho(\zeta)\zeta^{k -\ell} dA(\zeta)\bigg), \, \forall k \in \mathbb{N}, 
				\end{equation}
			where $dA$ denotes the Lebesgue area measure. 
			\end{prop}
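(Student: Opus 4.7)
The plan is to mirror the proof of Proposition~\ref{prop3.1}, exchanging sums over point charges for integrals against $\rho(\zeta)\,dA(\zeta)$. First, I would introduce the continuous analogue of the complex electric field, namely the Cauchy transform
\[
F(z) := \int_\Omega \frac{\rho(\zeta)}{z - \zeta}\, dA(\zeta),
\]
which is holomorphic on $\CC \setminus \overline{\Omega}$, vanishes at infinity, and admits the expansion
\[
F(z) = \sum_{k \geq 0} \frac{M_k}{z^{k+1}}, \qquad M_k := \int_\Omega \rho(\zeta)\, \zeta^k\, dA(\zeta).
\]
Squaring and collecting powers of $1/z$ yields
\[
F(z)^2 = \sum_{k \geq 0} \frac{1}{z^{k+2}} \sum_{\ell=0}^k M_\ell\, M_{k-\ell},
\]
so the right-hand side of the asserted identity appears as the Laurent coefficient of $z^{-(k+2)}$ in $F(z)^2$.

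To match the left-hand side, I would observe that $(k+1) \int_\Omega \rho^2(\zeta)\, \zeta^k\, dA$ is precisely the Laurent coefficient of $z^{-(k+2)}$ in
\[
\tilde G(z) := \int_\Omega \frac{\rho^2(\zeta)}{(z-\zeta)^2}\, dA(\zeta),
\]
so the proposition reduces to proving $F(z)^2 = \tilde G(z)$ as asymptotic series at infinity. Following the discrete template, one writes
\[
F(z)^2 = \iint_{\Omega \times \Omega} \frac{\rho(\zeta)\, \rho(\eta)}{(z-\zeta)(z-\eta)}\, dA(\zeta)\, dA(\eta),
\]
applies the partial-fraction identity $\frac{1}{(z-\zeta)(z-\eta)} = \frac{1}{\zeta - \eta}\bigl[\frac{1}{z-\zeta} - \frac{1}{z-\eta}\bigr]$ off the diagonal, and uses the continuous equilibrium condition $F(\zeta) \equiv 0$ on $\Omega$ to cancel the resulting off-diagonal piece after an inner integration over $\eta$, exactly mimicking the step where equation \eqref{E-squared} was collapsed in the discrete proof.

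The main obstacle is the treatment of the diagonal $\{\zeta = \eta\}$. In the discrete setting this locus carries positive mass and directly produces the self-energy term $\sum_i q_i^2/(z - z_i)^2$ that becomes the left-hand side, whereas in the continuous setting it has $dA \otimes dA$-measure zero, so the natural $\rho^2$ contribution is not isolated by a naive split of the double integral. The cleanest remedy is to approximate $\rho$ by a sequence of discrete charge distributions $\rho_N = \sum_j \rho(z_j^{(N)})\, \mathbf{1}_{Q_j^{(N)}}$ supported on refining grids $\{Q_j^{(N)}\}$ of $\Omega$, apply Proposition~\ref{prop3.1} to each, and pass to the limit, while carefully tracking both the scaling of the self-energy term and the extent to which the discrete equilibrium condition survives the limiting procedure. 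Once the asymptotic identity $F(z)^2 = \tilde G(z)$ is established, equating coefficients of $z^{-(k+2)}$ yields the claimed identity for every $k \in \mathbb{N}$.
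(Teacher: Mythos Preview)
Your setup is exactly the paper's: define the Cauchy transform $F$, square it, expand at infinity to get the right-hand side, and use the partial-fraction identity together with the equilibrium condition $\int_\Omega \rho(w)/(\zeta-w)\,dA(w)=0$ to kill the off-diagonal piece. Up to and including your third paragraph, this is the paper's argument verbatim (their $\widetilde G$ is your $F^2$).

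Where you diverge is in the remedy for the diagonal. The paper does \emph{not} discretize: it simply writes
\[
\widetilde G(z)=\iint_{\zeta\neq w}\frac{\rho(\zeta)\rho(w)}{(z-\zeta)(z-w)}\,dA\,dA+\int_\Omega\frac{\rho^2(\zeta)}{(z-\zeta)^2}\,dA(\zeta),
\]
interprets the second integral as the Beurling (principal-value) transform of $\rho^2$, and shows the first integral vanishes by the partial-fraction/equilibrium step. So the paper's ``fix'' is to declare the diagonal contribution directly as a singular integral operator, not to recover it through a limit.

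Your proposed discretization, by contrast, has a genuine obstacle that you name but do not overcome. If you approximate $\rho\,dA$ by point masses $q_j=\rho(z_j)\,|Q_j|$ on a grid with $|Q_j|\to 0$, then
\[
\sum_j q_j^2 z_j^k=\sum_j \rho(z_j)^2\,|Q_j|^2\,z_j^k\ \longrightarrow\ 0,
\]
not $\int_\Omega\rho^2\zeta^k\,dA$. So the discrete left-hand side of Proposition~\ref{prop3.1} vanishes in the limit. The only way the $\rho^2$-term can reappear is from the \emph{error} in the discrete equilibrium condition (your approximants are not themselves in equilibrium), and extracting exactly $(k+1)\int\rho^2\zeta^k\,dA$ from that error is a delicate two-scale cancellation you have not carried out. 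Unless you can make that precise, the discretization route does not close; the paper's direct principal-value split avoids this entirely.
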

			\begin{proof}		
			 The analog of the function $G(z)$ becomes:
				\begin{equation}
					\widetilde{G}(z) = \int_\Omega \int_\Omega \frac{\rho(\zeta) \rho(w)}{(z-\zeta)(z-w)} 
					dA(\zeta) dA(w).
				\end{equation}
			Again, let us compute $\widetilde{G}(z)$ in two different ways: first, we rewrite $\widetilde{G}(z)$ as
				\begin{equation*}
					\widetilde{G}(z) = 
					\int \int_{\zeta \neq w} \frac{\rho(\zeta) \rho(w)}{(z-\zeta)(z-w)} dA(\zeta) dA(w) 
					+\int_\Omega \frac{\rho^2(\zeta)}{(z-\zeta)^2} dA(\zeta).
				\end{equation*}
			The second integral is understood in the Cauchy Principal Value sense and is known as the
			Hilbert transform, or Beurling transform  \footnote{The Beurling transform
			is the most studied example of the class of Calder\'{o}n-Zygmund operators. In short, define
				\begin{equation*}
					T\rho(z) := \int_\Omega \frac{\rho^2(\zeta)}{(z-\zeta)^2} dA(\zeta) = \lim_{\epsilon\to 0}
					\int_{\Omega\cap\{|z-\zeta| > \epsilon\}} \frac{\rho^2(\zeta)}{(z-\zeta)^2} dA(\zeta).
				\end{equation*}
			Then $T$ is a bounded operator from $L^2$ to $L^2$ (with respect to the area measure) and preserves
			smooth functions.} in 2D -- cf. \cite{Ahlfors, Dra}.
			
			The first integral can be rewritten as
				\begin{align*}
					\int \int_{\zeta \neq w} \frac{\rho(\zeta) \rho(w)}{(z-\zeta)(z-w)} dA(\zeta) dA(w) 
					&= \int \int_{\zeta \neq w} \frac{\rho(\zeta) \rho(w)}{\zeta-w} 
					\bigg(\frac{1}{z-\zeta} - \frac{1}{z-w}\bigg) dA(\zeta) dA(w) \\
					&= \int_\Omega \frac{\rho(\zeta)}{z-\zeta}
					\bigg[\int_{w\neq \zeta} \frac{\rho(w)}{\zeta - w}dA(w)\bigg]dA(\zeta) \\
					&-
					\int_\Omega \frac{\rho(w)}{z-w}
					\bigg[\int_{\zeta\neq w} \frac{\rho(\zeta)}{\zeta - w}dA(\zeta)\bigg]dA(w) \\
					&= 0,
				\end{align*}
			as $\int_{\zeta\neq w} \frac{\rho(\zeta)}{\zeta - w}dA(\zeta) = 0$ is the condition for equilibrium.
			Therefore, we have that	
			\begin{equation*}
				\widetilde{G}(z) = \int_\Omega \frac{\rho^2(\zeta)}{(z-\zeta)^2} dA(\zeta).
			\end{equation*}					
			Expanding this expression, we find that
				\begin{equation*}
					\widetilde{G}(z) = \sum_{k=0}^{\infty} \frac{1}{z^{k+2}} \bigg((k+1)\int_\Omega \rho^2(\zeta) 
					\zeta^k dA(\zeta)\bigg).
				\end{equation*}
			On the other hand, expanding $\widetilde{G}(z)$ without taking into account the condition for 
			equilibrium, we obtain 
				\begin{equation*}
					\widetilde{G}(z) = \sum_{k=0}^{\infty} \frac{1}{z^{k+2}} \sum_{\ell = 0}^k 
					\bigg(\int_\Omega \rho(\zeta)\zeta^\ell dA(\zeta) 
					\int_\Omega \rho(\zeta)\zeta^{k -\ell} dA(\zeta)\bigg). 
				\end{equation*}
			Equating the coefficients, we find the following sequence of conditions for equilibrium:
				\begin{equation}
					(k+1)\int_\Omega \rho^2(\zeta) \zeta^k dA(\zeta) = 
					\sum_{\ell = 0}^k \bigg(\int_\Omega \rho(\zeta)\zeta^\ell dA(\zeta) 
					\int_\Omega \rho(\zeta)\zeta^{k -\ell} dA(\zeta)\bigg).				
				\end{equation}					
			\end{proof}			
			In particular, for $k = 0$ we obtain the expected continuous analog of Intersection Theorem \eqref{Abanov}:
				\begin{equation}
					\int_\Omega \rho^2(\zeta)dA(\zeta) = 
					\bigg( \int_\Omega \rho(\zeta) dA(\zeta)\bigg)^2.
				\end{equation}
		\begin{remark} \normalfont
		It is interesting to note that the necessary condition for $k=0$ does not involve the actual 
		configuration $\{ z_j \}$, but only the values of the charges $\{ q_j\}$. This peculiar fact can be 
		traced back to the different scaling behavior of equilibrium configurations in $\mathbb{R}^2$ versus 
		$\mathbb{R}^d, \, d \ne 2$: upon scaling an equilibrium configuration $\{ x_j \} \to 
		\{ \lambda x_j \}, \lambda > 0$, the total energy scales as $\lambda^{1-d}$ for $d \ne 2$, but 
		for $d = 2$ it only acquires an additive term proportional to the difference between the two sides in 
		\eqref{Abanov}:
		
		\begin{equation*}
		W[\{ \lambda z_j \}] - W[\{z_j\}] = \frac{1}{2\pi} \left ( \left (\sum_j q_j \right )^2 - \sum_j q_j^2\right )
		\ln \lambda
		\end{equation*}
	
		Therefore, for $d = 2$, the necessary condition \eqref{Abanov} follows from the fact that, if it were not 
		satisfied, moving all the charges according to an infinitesimal dilation would lead to a lower energy 
		($\lambda < 1$ if the right-hand side in \eqref{Abanov} is larger than the left, and $\lambda > 1$ otherwise), 
		which would mean the initial configuration was not in equilibrium. For all $d \ne 2$, this reasoning fails, 
		and any necessary condition seems to require the explicit dependence on the configuration itself
		(positions $\{ x_j\}$). It would be interesting to pursue this theme further.
		
		Furthermore, Propositions \ref{prop3.1}, \ref{prop3.2} provide \textit{necessary} conditions for equilibrium;
		one wonders if, taken collectively, they are indeed also \textit{sufficient}. We think it is a compelling 
		and possibly challenging problem to determine either (a.) if these equations are sufficent for equilibrium, 
		or, if they are not, (b.) find a corresponding collection of sufficient conditions.
		\end{remark}	
		\begin{remark} \normalfont
		The above theorem admits an easy generalization to any number of dimensions. Consider a pairwise interaction
		between a collection of particles $\{x_j\}_{j=1}^n \subset \RR^d$. Assume that the energy of the interaction
		depends only on the charges of the particles $\{q_j\}$ and the pairwise distances between them $|x_i  - x_j|$;
		these are natural assumptions. The total pairwise energy of the $n$-particle configuration is then given by
			\begin{equation}
				W := \sum_{i,j, i \neq j} q_i q_j \Phi(|x_i - x_j|),
			\end{equation}
		where $\Phi$ is some given function characterizing the interaction. Then, the force acting on particle $i$ is
			\begin{equation} 
				F_i = -\nabla_{x_i} \bigg( \sum_{j \neq i} q_i q_j \Phi(|x_i - x_j|) \bigg),
			\end{equation}
		and the equilibrium condition then is that $F_i = 0$, $i = 1, ..., n$, i.e.
			\begin{equation}\label{cnd}
				\sum_{j \neq i} q_i q_j \frac{x_i - x_j}{|x_i - x_j|}\Phi'(|x_i - x_j|) = 0,
			\end{equation}
		$i = 1, ..., n$. We can rewrite \eqref{cnd} through the logarithmic derivative of $\Phi$, as 
		\begin{equation}
		\sum_{j \neq i} q_i q_j \frac{x_i - x_j}{|x_i - x_j|^2} \left [r \Phi'(r) \right ]_{r = |x_i - x_j|}
				= 0. 
		\end{equation}
		Denote 
		\begin{equation} \label{mass}
		M_{ij} :=  \frac{q_i q_j }{|x_i - x_j|^2} \left [r \Phi'(r) \right ]_{r = |x_i - x_j|},
		\end{equation}
		then obviously $M_{ij} = M_{ji}, i \ne j$. Setting $M_{jj} : = 0, j = 1, 2, \ldots, n$, \eqref{cnd} becomes
		\begin{equation*}
		\sum_{j } M_{ij} (x_i -x_j) = 0, \,\, \forall \, i = 1, 2, \ldots, n,
		\end{equation*}
		so we can multiply each equation by $x_i$ and sum over $i$, to obtain (since $M_{ij} = M_{ji}$) 
		\begin{equation*}
		\sum_{i,j } M_{ij} (x_i -x_j)\cdot x_i = 0 \Rightarrow \sum_{i,j } M_{ij} (x_j -x_i)\cdot x_j = 0
		\end{equation*}
		Adding these equations and using $M_{jj} = 0$ leads to the general form of \eqref{Abanov} 
		\begin{equation} \label{ct}
		\sum_{i \ne j } M_{ij} |x_i -x_j|^2 = 0 \Rightarrow 
		\sum_{i \neq j} q_i q_j \left [r \Phi'(r) \right ]_{r =|x_i - x_j|}
				= 0. 
		\end{equation}
		If $\Phi(r) = -\log(r)$, then $r\Phi' = -1$, and \eqref{ct} is equivalent to \eqref{Abanov}. If $
		\Phi(r) = r^{-k}$, $k>0$, then $r\Phi' = -k \Phi$, so \eqref{ct} becomes 
		\begin{equation}
			W = \sum_{i,j, i \neq j} q_i q_j \Phi(|x_i - x_j|) = 0,
		\end{equation}
		or, equivalently, if the equilibrium exists, the total energy of the system is zero. 
		\end{remark}
\section{Degeneracy in Maxwell's Problem with Planar Charge Distributions.}
	
	We now address Problem~\ref{p2}, which stems from Maxwell's conjecture. The following question was first 
	discussed  in \cite{Killian}, also cf. \cite{Peretz}: 
	\begin{prop}
	Consider a distribution of point charges $\mu$ with support contained in a plane $\mathcal{H} \simeq \mathbb{R}^2$
	(without loss of generality, we take  $\mathcal{H}$ to be the $xy$-plane). Then  the critical manifold 
	$C \subset \mathbb{R}{^3}$ of $\mu$, defined by 
		\begin{equation}
			C = \{ x \in \mathbb{R}^3 : \nabla U^{\mu}(x) = 0  \}
		\end{equation}
	cannot contain a curve in $\mathcal{H}$. 
	\end{prop}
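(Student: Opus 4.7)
The plan is to use the reflection symmetry $z\mapsto -z$, forced by all charges sitting in $\mathcal{H}$, to reduce the 3D vanishing condition on $\gamma\subset\mathcal{H}$ to a purely 2D statement about the planar Riesz potential $u(w)=\sum_{j}q_j/|w-z_j|$, and then to extract a contradiction from the over-determination that results.

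First, the planar symmetry gives $U(x,y,z)=U(x,y,-z)$, so $\partial_zU\equiv0$ on $\mathcal{H}$ automatically. Hence $\nabla U=\mathbf{0}$ on $\gamma\subset\mathcal{H}$ is equivalent to $\nabla u=\mathbf{0}$ on $\gamma$, where $u:=U|_\mathcal{H}$. Since $u$ is real-analytic off $\{z_j\}$ and $\gamma$ is connected, $u$ takes a single constant value $c$ along $\gamma$. In local arc-length/normal coordinates $(\xi,\eta)$ with $\gamma=\{\eta=0\}$, this forces
\[
u=c+\alpha(\xi)\eta^{2}+O(\eta^{3}),\qquad 2\alpha(\xi)=\Delta_{xy}u\bigm|_{\gamma(\xi)}=\sum_{j}\frac{q_j}{r_{j}(\xi)^{3}},
\]
with $r_{j}(\xi)=|\gamma(\xi)-z_{j}|$, and the Cauchy--Kovalevskaya harmonic extension off $\mathcal{H}$ (using the automatic $\partial_zU|_\mathcal{H}=0$) yields the 3D expansion $U-c=\alpha(\xi)(\eta^{2}-z^{2})+O(3)$ near $\gamma$, the expected saddle in the normal plane.

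Next I would differentiate the planar equilibrium equation $\sum_{j}q_j(\gamma(\xi)-z_j)/r_{j}^{3}=\mathbf{0}$ with respect to arc length. With $\tau=\gamma'(\xi)$ a unit vector and the symmetric $2\times 2$ moment matrix
\[
M(\xi)=\sum_{j}\frac{q_{j}}{r_{j}^{5}}\bigl(\gamma(\xi)-z_{j}\bigr)\bigl(\gamma(\xi)-z_{j}\bigr)^{T},
\]
a direct computation yields $(fI-3M)\tau=\mathbf{0}$, where $f:=\tr M=\sum_jq_j/r_j^3=2\alpha$. Thus $\tau$ is forced to be an eigenvector of $M$ with eigenvalue $f/3$, while the complementary eigenvalue is $2f/3$; equivalently, the scalar identity $\det M=2f^{2}/9$ must hold identically along $\gamma$. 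Iterating tangential differentiation produces an infinite hierarchy of such real-analytic algebraic identities in the finitely many parameters $\{q_j,z_j\}$ and the points of $\gamma$.

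The main obstacle is to show that this infinite hierarchy cannot be satisfied along any arc of positive length. I would combine the decay $u\to 0$ at infinity (which forces $c=0$ if $\gamma$ is unbounded and tightly constrains the level value otherwise) with the $1/|w-z_j|$ blow-up of $u$ near each charge, which prescribes the leading asymptotics of $M$ and the higher moments as $\gamma$ approaches a $z_j$; by real-analyticity, any identity satisfied on $\gamma$ propagates along its analytic continuation, and one aims to deduce that the global over-determination admits no nontrivial solutions. An attractive cleaner alternative is to promote the naïve second-order vanishing of $U-c$ on $\gamma$ to infinite-order vanishing by exploiting the specific form of $u$ as a Riesz potential (for instance via its fractional-Laplacian characterization $(-\Delta_{xy})^{1/2}u\propto\sum_jq_j\delta_{z_j}$ in $\mathcal{H}$), and then invoke Aronszajn's strong unique continuation theorem to conclude $U\equiv c$ in a 3D neighborhood of $\gamma$, contradicting the isolated Coulomb singularities. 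Either way, the hardest step is precisely this boost from finite to infinite-order vanishing: 3D harmonicity alone does not provide it, and one must invoke the fine structure of the Riesz potential.
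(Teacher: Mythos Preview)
Your reduction to the planar restriction $u=U|_{\mathcal{H}}$ via the symmetry $\partial_z U\equiv 0$ on $\mathcal{H}$ is correct and matches the paper's first step. From there, however, what you have is a program rather than a proof, with a gap you yourself flag: you derive an infinite hierarchy of real-analytic identities along $\gamma$ (the eigenvalue constraint $(fI-3M)\tau=0$, the scalar relation $\det M=2f^{2}/9$, and their successive derivatives) but never establish that this hierarchy is inconsistent. Your first proposed closure---propagating the identities along the analytic continuation of $\gamma$ and invoking asymptotics near the $z_j$ and at infinity---is a hope that over-determination will eventually bite, not an argument. Your second route, via Aronszajn strong unique continuation, cannot work as written: by your own expansion $U-c=\alpha(\xi)(\eta^{2}-z^{2})+O(3)$, the function $U-c$ vanishes only to order two in the plane normal to $\gamma$, and nothing you have said upgrades this to infinite order. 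The fractional-Laplacian remark $(-\Delta_{xy})^{1/2}u\propto\sum_j q_j\delta_{z_j}$ is true but does not supply that upgrade either; away from the charges it merely says $u$ is $\tfrac12$-harmonic, which is already encoded in the 3D harmonic extension you started from.

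The paper bypasses all of this by complexifying. Regard $u(x,y)=U^{\mu}(x,y,0)$ as a holomorphic function on (an open subset of) $\CC^{2}$, and let $\Gamma_\mu=\{u=c\}\subset\CC^{2}$ be the complex level hypersurface containing the real arc $\gamma$. On $\gamma$ one has $(\partial_x u)^{2}+(\partial_y u)^{2}=0$ trivially, and since a holomorphic function on the one-dimensional complex variety $\Gamma_\mu$ that vanishes on a real arc must vanish identically, this eikonal relation persists on all of $\Gamma_\mu$. Factoring, one of $\partial_x u\pm i\,\partial_y u$ vanishes on $\Gamma_\mu$, so the tangent direction to $\Gamma_\mu$ is the constant isotropic vector $(1,\pm i)$; hence $\Gamma_\mu$ is a characteristic complex line $\{x\pm iy=\mathrm{const}\}$, whose intersection with $\RR^{2}$ is a single point, not a curve. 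This is the missing idea: rather than chasing an infinite tower of real identities, one passes to $\CC^{2}$, where the single constraint $(\nabla u)^{2}=0$ on a level set is immediately rigid.
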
	
	In other words, if $C$ contains a curve on which $\nabla U^{\mu}(x) = 0$, then the latter is necessarily 
	transversal to the plane $\mathcal{H}$.
	\begin{proof}
		To see this, let us assume $C$ contains a curve in $\mathcal{H}$. By a slight abuse of notation, we shall still 
		denote it by $C$. Since the support of $\mu$ is in the plane, we have that 
		$\frac{\partial U^{\mu}}{\partial z} = 0$ in $\mathcal{H}$. Now, consider the analytic hypersurface 
		$\Gamma_\mu := \{(x,y)\in\CC^2 : U^{\mu}(x,y,0) = const.\}$ in $\CC^2$. On the curve $C = \Gamma_\mu \cap
		\mathcal{H}$, we have that 
		\begin{equation*}
			\bigg(\frac{\partial U^{\mu}}{\partial x}\bigg)^2  
			+ \bigg(\frac{\partial U^{\mu}}{\partial y}\bigg)^2 = 0,
		\end{equation*}
	since each term vanishes individually on $C$. This implies that $u := U^\mu(x,y,0)$ (considered as an
	analytic function defined in $\CC^2 \setminus \supp\, \mu$) satisfies one of the two equations
		\begin{equation*}
			\frac{\partial u}{\partial x} + i \frac{\partial u}{\partial y} = 0,\hspace{3.5mm} \text{or},
			\hspace{3.5mm}
			\frac{\partial u}{\partial x} - i \frac{\partial u}{\partial y} = 0.
		\end{equation*}
	By analytic continuation, the same equation holds on $\Gamma_\mu$. In other words, on 
	$\Gamma_\mu := \{u = const\}$, we have $\frac{\partial u}{\partial x} / \frac{\partial u}{\partial y} = \pm i$.
	Therefore, $\Gamma_\mu$ must be a complex line (incidentally, a  characteristic line for the two-dimensional 
	Laplacian), i.e., $\Gamma_\mu = \{(x,y) \in \CC^2 : x + iy = const., \text{ or } x - iy = const.\}$. In either
	case, the intersection $\Gamma_\mu \cap \RR^2$ is a point, not a curve $C$. This gives the desired contradiction.
	\end{proof}
	
	\subsubsection{Further observations}
	
	\begin{itemize}
	
	\item We remark that $\mu$ need not consist only of point charges. The argument extends to arbitrary charge 
	distributions $\mu$ as long as the curve $C$ doesn't `cut' through the support of $\mu$

	\item Extending this line of argument to $\RR^3$, assume that $C \subset \RR^3$ is a 1-dimensional degenerate 
	curve where $\nabla U^\mu = 0$, where $\mu$ is a point charge distribution in $\RR^3$. Then 
	$C \subset \Gamma_\mu \subset \CC^3$, where $\Gamma_\mu$ is an analytic hypersurface. It is clear that we 
	cannot claim that $\nabla U^{\mu} = 0$ on $\Gamma_\mu$, but we do have that
	$(\nabla U^{\mu})^2 = \big(\frac{\p U^\mu}{\p x}\big)^2 + \big(\frac{\p U^\mu}{\p y}\big)^2 + 
	\big(\frac{\p U^\mu}{\p z}\big)^2 = 0$ on $\Gamma_\mu$, i.e., $\Gamma_\mu$ is characteristic with respect
	to the Laplacian. Expanding the equation for $\Gamma_\mu$ in Taylor series, we find that the lowest
	nonzero homogeneous terms $v$ in the expansion still satisfy the eikonal equation $(\nabla v)^2 = 0$. As
	is shown on p. 178 of \cite{KL}, due to the homogeneity of $v$, we see that either $v$ is linear
	(i.e., $v(x,y,z) = \alpha x + \beta y + \gamma z$, with $\alpha^2 + \beta^2 + \gamma^2 = 0$), or the level set 
	$\{ v = const.\}$ must be the isotropic cone $\Gamma_0 := \{(x,y,z) \in \CC^3 : (x-x_0)^2 + (y-y_0)^2 + (z-z_0)^2
	 = 0\}$. The intersection of $\{ v = const.\}$ with $\RR^3$ then must be either a line, or a circle. Thus, up to 
	terms of order $3$ or higher, the curve $C$ must be either a circle or a line. Note that all known examples of
	degeneracy support this statement; see, for example, A.I. Janu\v{s}auskas' examples of a degenerate line through 
	the center of a square with alternating charges $\pm 1$ at the vertices, or the circle of radius 1, centered at the 
	origin, contained in the $y-z$ plane, perpendicular to the $x$-axis with charges $q$ at $x = \pm 1$, and 
	$-q/\sqrt{2}$ at the origin- cf. \cite{Y}.
		
	\item In general, the geometry of an equipotential surface in a neighborhood of a degenerate point where 
	the gradient vanishes is rather mysterious in dimensions higher than 2. Maxwell, for one, conjectured that,  
	similarly to the plane, the equipotential surface splits,  in the neighborhood of a critical (degenerate) point, 
	into several equidistributed hypersurfaces intersecting each other at equal angles. This is well known to be false 
	-- cf. \cite{Kellogg}, the footnote 1 on p. 276, end of Ch. X. Moreover, A. Szulkin, and later J.C. Wood have 
	constructed harmonic polynomials in $\mathbb{R}^3$ whose level set near a critical point (where the gradient 
	vanishes) is homeomorphic to a plane -- a shocking surprise, cf. the discussion in Sect. 2.5 of \cite{Duren} 
	(also, see \cite{Szulkin} and \cite{Wood}). 
	\end{itemize}

\section{Faraday's Problem.}
	We conclude this exposition wtih the following question, which came up in connection with the seemingly 
	unrelated problem  of uniqueness of the best uniform approximation by harmonic functions from approximation 
	theory \cite{KS}. However, the problem is, in spirit, very close to the subject of this paper. 
	Let $B = \{ x \in \RR^d : |x| < 1\}$ be the unit ball in $\RR^d$, $d > 2$, and $\mu$ be a (signed) charge 
	distribution supported on the closure of $B$ which produces the same electrostatic potential outside of 
	$\ol{B}$ as the point charge $\delta_0$ centered at the origin. In other words,
		\begin{equation}
			U^{\mu}(x) =  \int_{\ol{B}} \frac{d\mu(y)}{|x-y|^{d-2}} = \frac{1}{|x|}, \quad |x| > 1.
		\end{equation}
	In essence, this condition says that the effect of the \textit{signed} charge density $\mu$ is the same
	outside the ball as that of a \textit{positive} point charge placed at the origin. 
	\begin{conjecture} \label{Faraday-Problem}
		There is a positive charge distribution $m$, absolutely continuous with respect to $|\mu|$, 
		which produces the same potential $U^\mu$ as  does $\mu$ outside $B$.
	\end{conjecture}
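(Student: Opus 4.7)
The plan is to recast the conjecture as a linear feasibility problem for non-negative densities and attack it along two fronts: Hahn--Banach duality, and a constructive approach via balayage. Writing $\mu = \mu_+ - \mu_-$ for the Jordan decomposition and setting $\epsilon = d\mu/d|\mu| \in \{\pm 1\}$, any candidate $m \ll |\mu|$ takes the form $dm = f\, d|\mu|$ with $f \geq 0$ measurable. The requirement $U^m \equiv U^\mu = 1/|x|^{d-2}$ on $\{|x| > 1\}$ becomes the single linear constraint
\begin{equation*}
\int_{\ol{B}} \frac{f(y) - \epsilon(y)}{|x - y|^{d-2}}\, d|\mu|(y) = 0, \qquad |x| > 1.
\end{equation*}
Equivalently, the signed measure $\tau := (f - \epsilon)|\mu|$ must be \emph{internal}, i.e., must have zero exterior potential, while obeying the pointwise bounds $\tau \geq -\mu_+$ on the positive Hahn set $A_+$ and $\tau \geq \mu_-$ on the negative Hahn set $A_-$. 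The conjecture reduces to showing that the subspace $\mathcal{I}(|\mu|)$ of internal signed measures absolutely continuous with respect to $|\mu|$ meets this affine half-space.

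The duality argument proceeds by contradiction: were no such $\tau$ to exist, a Hahn--Banach separation of the positive cone in $L^1(|\mu|)$ from the affine manifold defined by the potential equation would yield a nonzero $\phi \in L^\infty(|\mu|)$ with (i) $\phi \geq 0$ $|\mu|$-a.e., (ii) $\phi$ orthogonal to $\mathcal{I}(|\mu|)$, and (iii) $\int \phi\, d\mu \leq 0$. Condition (ii) identifies $\phi$, on $\supp|\mu|$, with the restriction of a function harmonic in a neighborhood of $\ol{B}$, representable as $\phi(y) = \int_{|x|>1} |x - y|^{2-d} h(x)\, dx$ for a suitable test $h$. Because $\mu - \delta_0$ has zero exterior potential, Fubini reciprocity yields $\int \phi\, d\mu = \phi(0)$ for every such harmonic $\phi$, so (iii) reduces to $\phi(0) \leq 0$ while $\phi \geq 0$ on $\supp|\mu|$. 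In parallel I would run a constructive balayage argument: sweep $\mu_-$ onto $F := \supp \mu_+$ to obtain a positive measure $\hat\mu_-$ on $F$ with $U^{\hat\mu_-} = U^{\mu_-}$ off $F$ (hence outside $\ol{B}$), and then set $m := \mu_+ - \hat\mu_-$; provided $\hat\mu_- \leq \mu_+$, this $m$ is non-negative, $\ll \mu_+ \ll |\mu|$, and has the required exterior potential.

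The main obstacle is, in both approaches, the same geometric subtlety. In the constructive route, one must establish the pointwise domination $\hat\mu_- \leq \mu_+$ on $F$; this ought to follow from the Frostman maximum principle applied to $U^{\mu_+} - U^{\mu_-} = 1/|x|^{d-2} \geq 0$ outside $\ol{B}$, but transferring this global inequality to a pointwise comparison of swept masses is delicate and may fail without further regularity of $\supp \mu_+$. In the dual route, one must contradict the pair ``$\phi \geq 0$ on $\supp|\mu|$ and $\phi(0) \leq 0$''; this is immediate by the strong minimum principle when $0 \in \supp|\mu|$ along with a neighborhood on which $\phi \geq 0$, but fails outright when $\supp|\mu|$ is separated from the origin (for example, when $\mu$ is supported on a thin spherical shell around $0$). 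Bridging this gap, I expect, will require exploiting that $\mu$ and $\delta_0$ share \emph{all} multipole moments, not just the monopole, which should supply enough additional orthogonality conditions on the witness $\phi$ to force $\phi \equiv 0$. Reconciling these two threads---or, failing that, identifying a regularity hypothesis on $\mu$ under which the conjecture holds cleanly---is the heart of the matter.
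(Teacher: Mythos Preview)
First, note that the statement you are attempting is labeled a \emph{conjecture} in the paper and is left open there: the authors give no proof, remarking only that the case $d=2$ is known via complex-analytic methods from \cite{KS} which do not extend to higher dimensions. There is therefore no proof in the paper to compare your proposal against, and indeed your own final paragraph concedes that both of your approaches stall at an identified obstacle.

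Beyond the gaps you flag yourself, the balayage route contains a concrete error. Classical balayage of $\mu_-$ onto $F=\supp\mu_+$ yields $\hat\mu_-$ with $U^{\hat\mu_-}=U^{\mu_-}$ quasi-everywhere \emph{on} $F$ and $U^{\hat\mu_-}\le U^{\mu_-}$ \emph{off} $F$; you have the equality on the wrong side. Hence outside $\ol B$ one only gets $U^{\mu_+-\hat\mu_-}\ge U^\mu$, not equality, so $m=\mu_+-\hat\mu_-$ does not reproduce the exterior potential. Even were that repaired, balayage onto $F$ is governed by harmonic measure on $F$ and gives no reason for $\hat\mu_-\ll\mu_+$, let alone $\hat\mu_-\le\mu_+$. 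In the dual route, step~(ii)---identifying the separating functional with the restriction to $\supp|\mu|$ of a function harmonic near $\ol B$---already presumes that the annihilator of $\mathcal I(|\mu|)$ in $L^\infty(|\mu|)$ consists exactly of such restrictions, a nontrivial approximation statement depending on the geometry of $\supp|\mu|$; and as you note, ``$\phi\ge 0$ on $\supp|\mu|$ with $\phi(0)\le 0$'' is not a contradiction, since the minimum principle needs nonnegativity on an open set, not merely on $\supp|\mu|$. These are exactly the reasons the problem remains a conjecture.
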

	The requirement of absolute continuity here implies, in particular, that $m$ cannot contain any charge
	outside of the support of $\mu$.
	On physical grounds, this conjecture says that, since $U^\mu$ is the same as the potential of a 
	\textit{postive} charge $\delta_0$, one can expect that it is possible to `clean out' the support of $\mu$, 
	getting rid of all negative charges and redistributing the positive charge in such a way that the new charge
	produces the same effect outside of its support. Conjecture~\ref{Faraday-Problem} is true in dimension
	2, as shown in \cite{KS}. Yet, the techniques applied there relied heavily on analytic functions and are not 
	available in higher dimensions. However, the result seems reasonable (on physical grounds, at the very least)
	in all dimensions. Moreover, if this conjecture holds, then (as explained in \cite{KS}) it has deep and 
	important consequences for the difficult problems of uniqueness of best harmonic approximation in the uniform
	norm in dimensions larger than 2.


\end{document}